\newtheorem{theorem}{Theorem}
\newtheorem{lemma}[theorem]{Lemma}
\newtheorem{corollary}[theorem]{Corollary}
\newtheorem{observation}[theorem]{Observation}
\def\@endtheorem{\endtrivlist}
\newcommand{\D}[2]{\mathcal{D}_{#1,#2}} 
\newcommand{\TL}[2]{#1^{#2}} 
\newcommand{\V}[1]{V(#1)} 
\newcommand{\Va}[1]{V_\alpha(#1)} 
\newcommand{\wdist}{w_1}
\newcommand{\wdistroot}{w_2}
\newcommand{\wdistspecial}{w_3}
\newcommand{\idxmin}{1}
\newcommand{\idxmax}{3}
\newcommand{\wnearestd}[1]{\mathrm{wnearest_d}(#1)}
\newcommand{\wnearestnd}[1]{\mathrm{wnearest_{nd}}(#1)}
\newcommand{\select}[3]{\mathrm{select}_{#1}(#2,#3)}
\newcommand{\tselect}{t_{\mathrm{select}}}
\newcommand{\map}[1]{\mathrm{map}(#1)}
\newcommand{\range}[2]{[#1,#2]}
\newcommand{\substr}[3]{#1[#2..#3]}
\newcommand{\rmq}[3]{\mathrm{RMQ}(#1,#2,#3)}
\newcommand{\depth}[1]{\mathrm{depth}(#1)}
\newcommand{\parent}[1]{\mathrm{parent}(#1)}
\newcommand{\lca}[2]{\mathrm{lca}(#1,#2)}
\newcommand{\preorderselect}[1]{\mathrm{pre\_select}(#1)}
\newcommand{\rleaf}[1]{\mathrm{rleaf}(#1)}
\newcommand{\Talpha}{T_\alpha}
\newcommand{\TalphaL}{\TL{\Talpha}{L}}
\newcommand{\Yalpha}{Y_\alpha}
\newcommand{\Zalpha}{Z_\alpha}
\newcommand{\nodepred}[1]{{#1}_{\mathrm{pred}}}
\newcommand{\nodesucc}[1]{{#1}_{\mathrm{succ}}}
\newcommand{\nodepredsucc}[1]{{#1}_*}
\begin{document}

\title{Succinct data-structure for nearest colored node in a tree}
\author{Dekel Tsur%
\thanks{Department of Computer Science, Ben-Gurion University of the Negev.
Email: \texttt{dekelts@cs.bgu.ac.il}}}
\date{}
\maketitle

\begin{abstract}
We give a succinct data-structure that stores a tree with colors on the nodes.
Given a node $x$ and a color~$\alpha$, the structure finds
the nearest node to $x$ with color~$\alpha$.
This results improves the $O(n\log n)$-bits structure of
Gawrychowski et al.~[CPM 2016].
\end{abstract}

\section{Introduction}
In the \emph{nearest colored node} problem
the goal is to store a tree with colors on the nodes
such that given a node $x$ and a color~$\alpha$, the nearest node to $x$ with
color~$\alpha$ can be found efficiently.
Gawrychowski et al.~\cite{GawrychowskiLMW16} gave a data-structure for this
problem that uses
$O(n\log n)$ bits and answers queries in $O(\log \log n)$ time,
where $n$ is the number of nodes in the tree.
Additionally, they considered a dynamic version of the problem in which
the colors of the nodes can be changed.
For this problem they gave an $O(n\log n)$ bits structure that supports
updates and queries in $O(\log n)$ time.
They also gave a structure with $O(n\log^{2+\epsilon} n)$ space,
optimal $O(\log n/\log \log n)$ query time, and
$O(\log^{1+\epsilon} n)$ update time.

In this paper we give a succinct structure for the static problem.
Our results are given in the following theorem.
\begin{theorem}\label{thm:main}
Let $T$ be a colored tree with $n$ nodes and colors from $\range{1}{\sigma}$,
and let $P_T$ be a string containing the colors of the nodes in preorder.
\begin{enumerate}
\item
For $\sigma = o(\log n/(\log\log n)^2)$, for any $k = o(\log n/\log^2 \sigma)$,
there is a representation of $T$ that uses $n H_k(P_T)+2n+o(n)$ bits
and answers nearest colored node queries in $O(1)$ time,
where $H_k(P_T)$ is the $k$-th order entropy of $P_T$.
\item
For $\sigma=w^{O(1)}$ (where $w$ is the word size),
for any function $f(n) = \omega(1)$,
there is a representation of $T$ that uses $n H_0(P_T)+2n+o(n)$ bits
and answers nearest colored node queries in $O(f(n))$ time.
\item
For $\sigma\leq n$, there is a representation of $T$ that
uses $n H_0(P_T)+2n+o(n H_0(P_T))+o(n)$ bits
and answers nearest colored node queries in $O(\log\frac{\log\sigma}{\log w})$
time.
\end{enumerate}
\end{theorem}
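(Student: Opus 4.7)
My plan has three ingredients: a succinct encoding of the tree topology $T$, a compressed encoding of the color sequence $P_T$ that supports $\mathrm{rank}_\alpha$ and $\mathrm{select}_\alpha$, and, for each color~$\alpha$, an auxiliary structure $\mathcal{D}_\alpha$ whose total size over all colors is $o(n)$. For the topology, I would use a $2n+o(n)$-bit fully-functional succinct tree (e.g., Navarro--Sadakane) supporting in $O(1)$ time $\mathrm{parent}$, $\mathrm{depth}$, $\mathrm{lca}$, $\mathrm{level\_anc}$, $\mathrm{pre\_rank}$, and $\mathrm{pre\_select}$, and hence the tree distance $\mathrm{dist}(x,y)=\mathrm{depth}(x)+\mathrm{depth}(y)-2\,\mathrm{depth}(\mathrm{lca}(x,y))$. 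For $P_T$, the three parts of the theorem differ only in which $\mathrm{rank}$/$\mathrm{select}$-capable compressed string I plug in: a $k$-th order entropy compressed structure with $O(1)$-time operations when $\sigma = o(\log n/(\log\log n)^2)$ for part~1; a $0$-th order entropy compressed structure with $O(f(n))$ query time when $\sigma$ is polynomial in $w$ for part~2; and a wavelet-tree-style $0$-th order compressed structure with $O(\log(\log \sigma / \log w))$ access for general $\sigma$ for part~3. Because all per-color auxiliary data is oblivious to this choice, the high-level algorithm is uniform and the query time is inherited from accessing~$P_T$.

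Next, for each color~$\alpha$ I define $T_\alpha$ as the Steiner topology tree induced by $V_\alpha(T)$: the subtree of $T$ spanning all color-$\alpha$ nodes together with their pairwise LCAs (and the root), with internal paths contracted. Since $|T_\alpha|=O(|V_\alpha(T)|)$, one has $\sum_\alpha |T_\alpha|=O(n)$. The point of $T_\alpha$ is the standard observation that for any query~$(x,\alpha)$ the nearest color-$\alpha$ node~$y^*$ satisfies $\mathrm{lca}(x,y^*)\in T_\alpha$; more precisely, $\mathrm{lca}(x,y^*)$ is the deepest $T_\alpha$-ancestor~$v$ of~$x$, and then $y^*$ is one of $O(1)$ precomputed candidates: $v$ itself (if it has color~$\alpha$), the designated nearest color-$\alpha$ descendant of $v$ through each child-branch of~$v$ in $T_\alpha$, and the nearest color-$\alpha$ ancestor of~$v$. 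Each candidate's distance to $x$ is computable in $O(1)$ from the distance formula, so $\mathcal{D}_\alpha$ only needs to (a) locate the deepest $T_\alpha$-ancestor of a given $T$-node, and (b) return the $O(1)$ associated candidates.

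The main obstacle is keeping $\sum_\alpha |\mathcal{D}_\alpha|=o(n)$, since even a constant number of bits per $T_\alpha$-node summed over colors is already $\Theta(n)$. I would attack this by splitting colors by frequency. For \emph{frequent} colors, those with $|V_\alpha(T)|\geq \log^2 n$, there are only $O(n/\log^2 n)$ of them, so one can afford $o(\log^2 n)$ bits per $T_\alpha$-node and store $\mathcal{D}_\alpha$ more or less explicitly. For \emph{rare} colors I would not store $T_\alpha$ at all; instead I would recover the needed local $T_\alpha$-topology (the deepest $T_\alpha$-ancestor of~$x$ and the designated candidate in each branch) on the fly, combining $\mathrm{rank}_\alpha$ and $\mathrm{select}_\alpha$ on $P_T$ with $\mathrm{lca}$/preorder navigation on~$T$, assisted by a four-Russians-style table indexed by short bit-patterns so that the per-query cost stays within the bound claimed in each part. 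Verifying that this on-the-fly reconstruction is correct, that it fits in $o(n)$ bits after summing over all rare colors, and that it meets the stated query time in all three parameter regimes is the technical heart of the proof.
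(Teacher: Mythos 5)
Your skeleton for parts~2--3 (succinct topology, compressed rank/select on $P_T$, a per-color Steiner-type tree $T_\alpha$, and a frequent/rare split) matches the paper's large-alphabet construction at a high level, but your central structural claim is wrong, and it is wrong at exactly the step that constitutes the real difficulty. First, $\lca{x}{y^*}$ need not lie in $T_\alpha$ at all, let alone equal the deepest $T_\alpha$-ancestor $v$ of $x$ (take a single $\alpha$-node $c$ and an $x$ hanging off the middle of the root-to-$c$ path: your Steiner tree is $\{c,\mathrm{root}\}$, yet $\lca{x}{c}$ is an internal path node). More seriously, your candidate set --- $v$ itself, the best $\alpha$-descendant per $T_\alpha$-child-branch of $v$, and the nearest $\alpha$-\emph{ancestor} of $v$ --- misses the answer whenever the nearest $\alpha$-node is neither a descendant nor an ancestor of $v$. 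Concretely: let $v$ be the LCA of two $\alpha$-nodes, each at distance $100$ below it; let $x$ be a child of $v$ off those two paths; let an ancestor $p$ of $v$ at distance $50$ have an $\alpha$-colored child $q$ in another branch, and color nothing else. Then $q$ is at distance $52$ from $x$ and beats every one of your candidates (distance $101$), and no ancestor of $v$ is $\alpha$-colored. Computing this ``nearest $\alpha$-non-descendant'' is precisely what the paper spends most of its effort on: it clusters $T_\alpha$ into a tree $T_\alpha^L$, assigns each cluster three weights (boundary-to-boundary distance, and distances from the root and from the boundary leaf to the nearest $\alpha$-node inside the cluster), defines a weighted distance between clusters, and precomputes the weighted-nearest descendant and non-descendant of every cluster via a two-level mini/micro decomposition with lookup tables, together with a sampled RMQ over per-color depth arrays for nearest $\alpha$-descendants. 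Your proposal contains no mechanism that recovers this quantity from $O(1)$ candidates.

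Two further problems. Your space accounting for frequent colors does not work: ``$o(\log^2 n)$ bits per $T_\alpha$-node'' summed over all frequent colors is $\sum_\alpha |T_\alpha|\cdot o(\log^2 n) = o(n\log^2 n)$, not $o(n)$; the bound $O(n/\log^2 n)$ on the \emph{number} of frequent colors only licenses polylogarithmically many bits \emph{per color}, which is nowhere near an explicit representation of $T_\alpha$. The paper escapes this with the same mini/micro-tree plus lookup-table machinery, storing $\Theta(\log n)$-bit pointers only at cluster boundaries. Finally, part~1's $O(1)$ query time does not follow from merely swapping in a faster rank/select structure: any frequent/rare dichotomy with a super-constant threshold costs $\omega(1)$ per query on rare colors, which is why the paper proves part~1 by an entirely separate small-alphabet construction --- micro-trees of size $\Theta(\log_\sigma n)$ storing, for \emph{every} color, nearest-$\alpha$ pointers from the two boundary nodes, affordable only because $\sigma = o(\log n/(\log\log n)^2)$. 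You would need to add that construction (or something equivalent) to obtain part~1.
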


\subsection{Related work}
Several papers studied data-structures for storing
colored trees with support for various
queries~\cite{GearyRR06,BarbayGMR07,FerraginaLMM09,HeMZ14,Tsur_labeled,MuthukrishnanM96,BilleCG14}.
In particular, the problem of finding the nearest ancestor with
color~$\alpha$ was considered
in~\cite{GearyRR06,HeMZ14,Tsur_labeled,MuthukrishnanM96,BilleCG14}.
In order to solve the nearest colored node problem, we combine techniques from
the papers above and from Gawrychowski et al.~\cite{GawrychowskiLMW16}.

Another related problem is to find an approximate nearest node with
color~$\alpha$. This problem has been studied in general 
graphs~\cite{Chechik12,HermelinLWY11,LkackiOPSZ15}
and planar graphs~\cite{AbrahamCKW15,LiMN13,MozesS15}.

\section{Preliminaries}

A node with color~$\alpha$ will be called \emph{$\alpha$-node}.
We also use other $\alpha$- terms with the appropriate meaning,
e.g.\ an $\alpha$-descendant of a node $v$ is a descendant of $v$ (including $v$)
with color~$\alpha$.

\subsection{Sampled RMQ}

A \emph{Range Minimum Query} (RMQ) structure on an array $A$
is a structure that given indices $i$ and $j$, returns an index $k$ such that
$A[k]$ is the minimum element in the subarray $\substr{A}{i}{j}$.
Our data-structure uses sampled RMQ structure, described in the following
lemma.
\begin{lemma}\label{lem:rmq}
Let $A$ be an array of $n$ numbers. For every integer $L$,
there is a data-structure that uses $O((n/L)\log L)$ bits and answers
in constant time RMQ queries $i,j$ in which $i-1$ and $j$ are multiples of $L$.
\end{lemma}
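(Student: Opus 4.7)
The plan is to reduce a block-aligned RMQ on $A$ to an ordinary RMQ on a much shorter coarsened array. Let $n' = n/L$ and define $B[k]$ to be the minimum of $\substr{A}{(k-1)L+1}{kL}$ for $k=1,\dots,n'$. Because both $i-1$ and $j$ are multiples of $L$, any legal query range $\range{i}{j}$ decomposes exactly into whole blocks $a,a+1,\dots,b$, where $a = (i-1)/L+1$ and $b = j/L$; so the minimum of $\substr{A}{i}{j}$ lies inside the block $k^{*}$ that attains the minimum of $\substr{B}{a}{b}$.

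The data-structure therefore has two parts. First, a succinct RMQ structure on $B$ that answers RMQ queries on $B$ in constant time without needing to store $B$ itself (the Fischer--Heun $2n'+o(n')$-bit encoding of the 2D min-heap / Cartesian tree of $B$ is the standard choice), costing $O(n/L)$ bits. Second, an array $D$ of $n'$ entries where $D[k] \in \{1,\dots,L\}$ records the position within block $k$ that attains $B[k]$; storing each entry in $\lceil \log L \rceil$ bits costs $O((n/L)\log L)$ bits. To answer a query I compute $k^{*}$ via the RMQ on $B$ and return $(k^{*}-1)L + D[k^{*}]$; each step takes constant time. Summing the two parts gives $O((n/L)\log L)$ bits overall.

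The only non-trivial ingredient is the fact that a constant-time succinct RMQ structure on $B$ exists that does not need access to the values of $B$; this is the standard Fischer--Heun result, which I would simply invoke rather than re-derive. Everything else is routine bookkeeping: verifying that block-aligned ranges map to ranges in $B$, checking that the returned global index lies in $\range{i}{j}$, and checking the bit accounting. Consequently I do not anticipate a real obstacle here; the lemma is essentially a succinct repackaging of the classical "sample one representative per block" RMQ idea, and the only design choice is whether to store the offsets explicitly (done here) or reconstruct them, the former being necessary precisely because the structure must avoid storing $A$.
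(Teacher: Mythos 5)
Your proposal is correct and follows essentially the same route as the paper: partition $A$ into blocks of size $L$, store a Fischer-style succinct RMQ structure over the array of block minima (which answers queries without accessing the values), and store the within-block offset of each block's minimum in $O(\log L)$ bits per block. The observation that block-aligned queries map exactly to ranges of whole blocks, and the space accounting, match the paper's argument.
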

\begin{proof}
Partition $A$ into blocks of size $L$, and let $A'$ be an array of size
$\lceil n/L\rceil$ in which $A[i]$ is the minimum element in the $i$-th block
of $A$.
The data-structure stores the RMQ structure of Fischer~\cite{Fischer10} on $A'$
(note that this structure does not access $A'$ in order
to answer queries),
and an array $B$ in which $B[i]$ is the index of the
minimum element in the $i$-th block of $A$, relative to the start of the block.
The space for the RMQ structure is $2n/L+o(n/L)$ bits, and the space for $B$
is $O((n/L)\log L)$ bits. Thus, the total space is $O((n/L)\log L)$ bits.
\end{proof}

\subsection{Tree decomposition}\label{sec:tree-decomposition}
We use the following tree
decomposition~\cite{AlstrupHLT97,AlstrupSS97,FarzanM08}.
\begin{lemma}\label{lem:tree-decomposition}
For a tree $T$ with $n$ nodes and an integer $L$, there is a collection
$\D{T}{L}$ of subtrees of $T$ with
the following properties.
\begin{enumerate}
\item
Every edge of $T$ appears in exactly one tree of $\D{T}{L}$.
\item
The size of every tree in $\D{T}{L}$ is at most $L$ and at least~$2$.
\item
The number of trees in $\D{T}{L}$ is $O(n/L)$.
\item\label{enum:boundary}
For every $T'\in \D{T}{L}$, at most two nodes of $T'$ can appear
in other trees of $\D{T}{L}$.
These nodes are called the \emph{boundary nodes} of $T'$.
\item\label{enum:boundary2}
A boundary node of a tree $T'\in \D{T}{L}$ can be either a root of $T'$
or a leaf of $T'$.
In the latter case the node will be called the \emph{boundary leaf} of $T'$.
\end{enumerate}
\end{lemma}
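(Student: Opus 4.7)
The plan is to build $\D{T}{L}$ by a bottom-up greedy sweep of $T$ in post-order, carrying at each node a \emph{pending fragment} consisting of that node together with some of its descendants whose edges have not yet been committed to a tree of the decomposition.

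First I would compute all subtree sizes $|T_u|$. Processing a node $v$ in post-order, I split its children into \emph{heavy} ones (with $|T_c|>L$, whose subtrees have already been decomposed when $v$ is reached) and \emph{light} ones (whose pending fragments are still sitting at $c$). I would merge $v$ with the light children's fragments into a single working fragment; whenever adding the next child's fragment would push the size past $L$, I close off the current working fragment as a new tree of $\D{T}{L}$ with $v$ as its root boundary, and restart with a fresh fragment consisting just of $\{v\}$. For each heavy child $c$ of $v$ the edge $(v,c)$ is handled next: if it still fits in the working fragment and that fragment does not yet contain any other heavy-child edge, the edge is absorbed and $c$ becomes the fragment's leaf boundary; otherwise the edge is emitted as its own two-node component with $v$ as root boundary and $c$ as leaf boundary. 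At the root of $T$, the residual fragment is emitted directly.

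Properties (1), (2), (4), and (5) would follow essentially by construction: every emitted tree is rooted at the node $v$ that triggered its creation, and the only nodes that can appear in other trees of $\D{T}{L}$ are $v$ itself (shared with the tree produced by an ancestor of $v$) and at most one descendant $c$ absorbed as a heavy-child endpoint (shared as the root boundary of the topmost component inside $T_c$). For property (3), I would argue that each emitted tree either has size at least roughly $L/2$, of which there can be only $O(n/L)$, or consists of a single heavy-child edge; the latter are in bijection with heavy children, and a standard charging (each heavy child $c$ is paid for by the $>L$ nodes of $T_c$ that lie outside any deeper heavy subtree) shows that there are $O(n/L)$ heavy children in total.

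The hard part will be making the merging policy tight enough to guarantee property~(5) while still meeting the $O(n/L)$ bound. In particular, at a node $v$ with many heavy children one must decide which heavy-child edges are absorbed into $v$'s working fragment and which are emitted as singleton-edge components, and one must prevent a component from developing two distinct leaf boundaries or a boundary node in its interior. The local rule ``absorb at most one heavy-child edge per fragment, and close a fragment once its size reaches the band $[L/2,L]$'' suffices to enforce both constraints simultaneously, but verifying this invariant against all the corner cases (long spines, $v$ with only heavy children, singleton residual fragments that must be merged into the parent's tree to respect the size-$\geq 2$ condition) is the most delicate step of the proof.
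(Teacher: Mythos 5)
The paper does not actually prove this lemma; it imports it wholesale from prior work (Alstrup et al.\ and Farzan--Munro), so you are reconstructing a known construction from scratch. Your overall plan --- a post-order sweep carrying a pending fragment, with heavy-child edges treated specially and a charging argument for the $O(n/L)$ bound --- is the right skeleton, and it is essentially the Farzan--Munro decomposition.

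However, there is a genuine gap in how you enforce properties~4 and~5, and it is not one of the corner cases you flag. Consider a node $v$ with many light children: your rule closes off a working fragment rooted at $v$ whenever the next child's fragment would overflow it, restarts with $\{v\}$, and then passes the \emph{last} working fragment up to $v$'s parent as the pending fragment. That last fragment can contain $v$ together with several of its child subtrees. When it is later absorbed into a component $C$ emitted at an ancestor, $v$ is an \emph{internal} node of $C$ (it has both its parent and some children inside $C$), yet $v$ also appears as the root of the components you already emitted at $v$. So $v$ is a boundary node of $C$ that is neither the root nor a leaf of $C$, violating property~5, and $C$ now has three boundary nodes (its root, $v$, and possibly a heavy-child leaf), violating property~4. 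Your stated local rule (``absorb at most one heavy-child edge per fragment, close in the band $[L/2,L]$'') does not address this, because the offending node is a light-side phenomenon. The missing invariant is an \emph{all-or-nothing} rule at each node: either no permanent component is emitted at $v$ and its entire accumulated fragment is passed up, or \emph{all} of $v$'s child edges are packed into permanent components rooted at $v$ (tolerating one undersized final group) and only the bare singleton $\{v\}$ is passed up, so that $v$ can only ever appear in an ancestor's component as a leaf. The counting still works because each node at which emission occurs produces at least one group of size $\Omega(L)$ and at most one undersized group. Separately, note that your trigger ``close the current working fragment when the next child would overflow it'' can fire while the working fragment is still the singleton $\{v\}$ (e.g.\ on a long path where the first child already carries a size-$L$ fragment), which would emit a size-1 tree in violation of property~2; you need to close the \emph{child's} fragment or restructure the trigger in that case.
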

For a tree $T$ and an integer $L$ we define a tree $\TL{T}{L}$ as follows.
Construct a tree decomposition $\D{T}{L}$ according to
Lemma~\ref{lem:tree-decomposition}.
If the root $r$ of $T$ appears in several trees of $\D{T}{L}$,
add to $\D{T}{L}$ a tree that consists of $r$.
The tree $\TL{T}{L}$ has a node $v_S$ for every tree $S \in \D{T}{L}$.
For two trees $S_1,S_2\in\D{T}{L}$, $v_{S_1}$ is the parent of $v_{S_2}$
in $\TL{T}{L}$ if and only if the root of $S_2$ is equal to the boundary leaf of
$S_1$.

\section{Structure for small alphabet}\label{sec:small-alphabet}

In this section we prove part~1 of Theorem~\ref{thm:main}.
Our structure is similar to the labeled tree structure of
He et al.~\cite{HeMZ14}.
As in~\cite{HeMZ14}, the data-structure stores $P_T$
in the compressed structure of Ferragina and Venturini~\cite{FerraginaV07},
the balanced parenthesis string of $T$ (uncompressed),
and additional information that takes $o(n)$ space.
The space for storing $P_T$ is $nH_k(P_T)+o(n)$ bits,
and the space for storing the balanced parenthesis string is $2n$ bits.
Using $o(n)$ space, standard tree queries such as computing the depth of a node
or computing the lowerst common ancestor of two nodes
can be answered in $O(1)$ time (see~\cite{FarzanM08} for details).

Using the tree decomposition of Lemma~\ref{lem:tree-decomposition},
the tree $T$ is partitioned into \emph{mini-trees} of
size at most $L'=\log^{\Theta(1)} n$, and every mini-tree
is decomposed into \emph{micro-trees} of size at most
$L = \Theta(\log_\sigma n)$.
From Property~\ref{enum:boundary} of the tree decomposition we have that
if $y$ is the nearest $\alpha$-node to a node $x$
and $y$ is not in the same mini-tree of $x$,
then the path from $x$ to $y$ passes through a boundary node
of the mini-tree of $x$.
Similar property holds for micro-trees.
Based on the observation above,
the data-structure stores the following additional information.
\begin{inparaenum}[(1)]
\item
A lookup table that contains for every colored tree $S$ of size at most $L$,
every node $x$ in $S$, and every color~$\alpha$,
the $\alpha$-node in $S$ that is nearest to $x$ (if such nodes exist).
\item
For every mini-tree $S$ and every color~$\alpha$,
the $\alpha$-nodes in $S$ that are nearest
to the root of $S$ and to the boundary leaf of $S$.
\item
For every micro-tree $S$ and every color~$\alpha$,
the $\alpha$-nodes in the mini-tree of $S$ that are nearest
to the root of $S$ and to the boundary leaf of $S$.
\end{inparaenum}
The space for the lookup table is $O(2^{2L} \sigma^{L} L\sigma\log L) =
o(n)$,
the space for the mini-tree information is $O((n/L')\sigma\log n) = o(n)$, and
the space for the micro-tree information is
$O((n/L)\sigma\log L')
= O(\sigma\log\sigma\cdot n\log\log n/\log n)
= o(n)$.

Given a query $x,\alpha$, we obtain up to five candidates for the nearest
$\alpha$-node to $x$:
the $\alpha$-node in the micro-tree of $x$ that is nearest to $x$,
and the four $\alpha$-nodes stored for the micro-tree and mini-tree of $x$.
Note that in order to use the lookup-table, we need to generate
the balanced parenthesis string of the the micro-tree of $x$,
and a sequence containing the colors of the nodes in this tree in preorder.
This can be done in $O(1)$ time due to property~\ref{enum:boundary2} of
Lemma~\ref{lem:tree-decomposition}
(this property implies that for a tree $T'\in  \D{T}{L}$ there are two
intervals $I_1$ and $I_2$ such that a node $x\in T$ is
a non-root node of $T'$ if and only if the preorder rank of $x$ is in
$I_1 \cup I_2$).
The distance between $x$ and every candidate $y$ can be computed in
$O(1)$ time (the distance is $\depth{x}+\depth{y}-2\cdot\depth{\lca{x}{y}}$).
Therefore, the query is answered in $O(1)$ time.

\section{Structure for large alphabet}\label{sec:large-alphabet}

Our data-structure for large alphabet stores
the rank-select structure of Belazzougui and Navarro~\cite{BelazzouguiN12}
on $P_T$,
a succinct tree structure for the tree $T$ without the colors,
and additional information that will be described below.
The space of the rank-select structure is
$n H_0(P_T)+o(n)$ for $\sigma=w^{O(1)}$,
and $n H_0(P_T)+2n+o(n H_0(P_T))+o(n)$ for general $\sigma$.
The space for storing $T$ is $2n+o(n)$.

Our structure is similar to the structure of
Gawrychowski et al.~\cite{GawrychowskiLMW16}.
We next give a short description of the structure of~\cite{GawrychowskiLMW16}.
For a color~$\alpha$,
let $\Zalpha$ be the set of all $\alpha$-nodes and their ancestors,
and let $\Yalpha$ be the set of all nodes $x \in \Zalpha$ such that
either $x$ has color~$\alpha$, or $x$ has at least two children in $\Zalpha$.
We define a tree $\Talpha$ whose nodes are $\Yalpha$, and $x$ is the parent of
$y$ in $\Talpha$ if and only if $x$ is the lowest proper ancestor of $y$
that is in $\Yalpha$.

Let $x,\alpha$ be a query.
If $x$ is an $\alpha$-node, the answer to the query is trivial, so assume for
the rest of the section that $x$ is not an $\alpha$-node.
We define nodes in the tree that will be used for answering the query:
$z$ is the lowest ancestor of $x$ which has an $\alpha$-descendant
that is not a descendant of $x$.
Moreover, $y$ (resp., $y_2$) is the lowest descendant (resp., ancestor) of $z$ which
is in $\Yalpha$.
The nearest $\alpha$-node to $x$ is either
\begin{inparaenum}[(1)]
\item
the nearest $\alpha$-descendant of $x$,
\item
the nearest $\alpha$-node to $y$, or
\item
the nearest $\alpha$-node to $y_2$.
\end{inparaenum}
Based on this observation, the structure of
Gawrychowski et al.~\cite{GawrychowskiLMW16}
finds these three candidate nodes, and returns the one that is closest to $x$.
Our structure is based on a slightly different observation:
The nearest $\alpha$-node to $x$ is either
\begin{inparaenum}[(1)]
\item
the nearest $\alpha$-descendant of $x$,
\item
the nearest $\alpha$-descendant of $y$, or
\item
the nearest $\alpha$-non-descendant of $y$.
\end{inparaenum}

We next describe the approach we use for finding the node $y$,
which is different than the one used in~\cite{GawrychowskiLMW16}.
For a node $v$, let $\nodepred{v}$ (resp., $\nodesucc{v}$) be the last
(resp., first) $\alpha$-node in preorder that appears before
(resp., after) $v$ in the preorder.
The following lemma shows how to efficiently find $z$.
\begin{lemma}\label{lem:z}
Let $\nodepredsucc{x}$ be the node from $\{\nodepred{x},\nodesucc{x}\}$
that maximizes the depth of $\lca{\nodepredsucc{x}}{x}$
(if $\nodepred{x}$ does not exist then $\nodepredsucc{x}=\nodesucc{x}$
and vice versa).
Then, $z = \lca{\nodepredsucc{x}}{x}$.
\end{lemma}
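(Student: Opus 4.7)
The plan is to prove $z = \lca{\nodepredsucc{x}}{x}$ by sandwiching depths: first $\depth{\lca{\nodepredsucc{x}}{x}} \leq \depth{z}$, then $\depth{\lca{\nodepredsucc{x}}{x}} \geq \depth{z}$. Since both nodes lie on the root-to-$x$ path, equal depth forces equality.

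For the upper bound I would observe that $\nodepred{x}$ precedes $x$ in preorder and therefore cannot lie in the subtree of $x$. Hence $\lca{\nodepred{x}}{x}$ is a proper ancestor of $x$ that has $\nodepred{x}$ --- an $\alpha$-node outside the subtree of $x$ --- as a descendant, and by the minimality in the definition of $z$ we get $\depth{\lca{\nodepred{x}}{x}} \leq \depth{z}$. The symmetric bound for $\lca{\nodesucc{x}}{x}$ holds whenever $\nodesucc{x}$ also lies outside the subtree of $x$.

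For the lower bound I would use a preorder-interval argument. Let $[l_z,r_z]$ and $[l_x,r_x]$ be the preorder intervals of the subtrees of $z$ and $x$; note $[l_x,r_x]\subseteq[l_z,r_z]$. By definition of $z$, pick an $\alpha$-node $w$ that is a descendant of $z$ but not of $x$, so its preorder rank lies in $[l_z,l_x-1]\cup[r_x+1,r_z]$. If $w$ falls in the left piece, then the preorder rank of $\nodepred{x}$ is squeezed between that of $w$ and $l_x-1$, which places $\nodepred{x}$ inside the subtree of $z$ and outside the subtree of $x$; consequently $\lca{\nodepred{x}}{x}$, being a common ancestor of two descendants of $z$, has depth at least $\depth{z}$. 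The other case is symmetric with $\nodesucc{x}$. Taking whichever of the two maximizes the LCA depth gives $\depth{\lca{\nodepredsucc{x}}{x}}\geq\depth{z}$.

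The main obstacle is the boundary case where $\nodesucc{x}$ happens to lie inside the subtree of $x$ (which would make $\lca{\nodesucc{x}}{x}=x$ and break the upper bound). I expect this case to be dispatched by the surrounding algorithm, which separately extracts the nearest $\alpha$-descendant of $x$, so the lemma is invoked only when $\nodesucc{x}$ is guaranteed to lie outside the subtree of $x$; under that hypothesis, the two bounds above combine to yield $z=\lca{\nodepredsucc{x}}{x}$.
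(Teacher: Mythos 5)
Your route is genuinely different from the paper's. The paper sets $x'=\lca{\nodepred{x}}{\nodesucc{x}}$, classifies the nodes lying between $\nodepred{x}$ and $\nodesucc{x}$ in preorder into five sets relative to the paths from $x'$ down to $\nodepred{x}$ and to $\nodesucc{x}$, and argues the conclusion according to which set contains $x$ (writing out only one of the cases). Your two-sided sandwich is more uniform: the upper bound $\depth{\lca{\nodepredsucc{x}}{x}}\leq\depth{z}$ follows from the minimality in the definition of $z$ once the chosen neighbour is known to lie outside the subtree of $x$, and your lower bound is correct and complete --- whichever side of $[l_z,l_x-1]\cup[r_x+1,r_z]$ contains the witness $w$, the corresponding neighbour ($\nodepred{x}$ or $\nodesucc{x}$) is squeezed into the subtree of $z$, so its lca with $x$ is a descendant of $z$ on the root-to-$x$ path. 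Where both bounds apply, equality follows, and you cover all configurations at once rather than deferring most of them to ``the other cases are similar.''

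The boundary case you flag is, however, a genuine problem, and your escape hatch does not match the paper. If $x$ has a proper $\alpha$-descendant, then $\nodesucc{x}$ is the first such descendant in preorder, $\lca{\nodesucc{x}}{x}=x$, the depth-maximizing rule selects $\nodepredsucc{x}=\nodesucc{x}$ (since $\lca{\nodepred{x}}{x}$ is a proper ancestor of $x$), and the lemma would assert $z=x$, which contradicts the definition of $z$ --- $x$ has no $\alpha$-descendant that is not a descendant of $x$. So the statement as written fails there; the paper's own proof does not repair this, since this is exactly the ``set 5'' configuration it omits as similar. Your proposed precondition --- that the lemma is only invoked when $\nodesucc{x}$ lies outside the subtree of $x$ --- is not guaranteed by the algorithm: $z$ is needed to compute $y$ and the non-descendant candidates even when $x$ does have $\alpha$-descendants, and step~1 of the query algorithm calls the lemma unconditionally. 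The correct repair is to replace $\nodesucc{x}$ by the first $\alpha$-node in preorder after the entire subtree of $x$, i.e.\ $\nodesucc{\rleaf{x}}$ (still a single rank/select on $P_T$); both of that node's required properties (it is not a descendant of $x$, and it is captured by the right piece $[r_x+1,r_z]$ of your interval argument) then hold, and your two bounds close the proof verbatim. Without some such modification, neither your argument nor the paper's establishes the lemma as stated.
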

\begin{proof}
We will prove the lemma for the case in which $\nodepred{x}$ and $\nodesucc{x}$
exist and $\nodepred{x}$ is not an ancestor of $\nodesucc{x}$.
The proofs for the other cases are similar and thus omitted.

Let $x' = \lca{\nodepred{x}}{\nodesucc{x}}$. 
Let $\nodepred{P}$ and $\nodesucc{P}$ be the paths from $x'$ to
$\nodepred{x}$ and $\nodesucc{x}$, respectively.
The nodes that appear between
$\nodepred{x}$ and $\nodesucc{x}$ in preorder can be categorized into five
sets:
\begin{enumerate}
\item All proper descendants of $\nodepred{x}$.
\item All proper descendants of nodes on the path $\nodepred{P}$
that are ``to the right'' of this path.
Formally, this set contains every node $v$ which is a descendant of a node $w$
on $\nodepred{P}$ (excluding the endpoints)
such that the child of $w$ which is on the path from $w$ to
$v$ is to the right of the child of $w$ which is on the path $\nodepred{P}$.
\item All proper descendants of $x'$ that are
``between the paths'' $\nodepred{P}$ and $\nodesucc{P}$.
\item All proper descendants of nodes on the path $\nodesucc{P}$
(excluding the endpoints) that are ``to the left'' of this path.
\item All nodes on the path $\nodesucc{P}$ (excluding the endpoints).
\end{enumerate}
Since $\nodepred{x}$ and $\nodesucc{x}$ are consecutive $\alpha$-nodes in the
preorder of the nodes, the nodes in the sets above are not $\alpha$-nodes.
Moreover, the nodes in sets 1--4 that are not descendants of $x$
are not in $\Zalpha$.
Suppose $x$ is in set~2.
In this case, $\lca{\nodepred{x}}{x}$ is on $\nodepred{P}$ while
$\lca{\nodesucc{x}}{x} = x'$.
Therefore, $\nodepredsucc{x} = \nodepred{x}$.
Since $\lca{\nodepred{x}}{x} \in \Zalpha$ (as $\lca{\nodepred{x}}{x}$ is an
ancestor of $\nodepred{x}$) and every ancestor $x'$ of $x$ that is below
$\lca{\nodepred{x}}{x}$ does not have an $\alpha$-descendant that is not a
descendant of $x$,
it follows that $\lca{\nodepred{x}}{x} = z$.
Thus, the lemma follows in this case.
The proofs for the other cases are similar.
\end{proof}
Finding $\nodepred{x}$ and $\nodesucc{x}$ can be done using rank
and select queries on $P_T$.
The next lemma shows how to find $y$.
\begin{lemma}\label{lem:w}
Let $v$ be a node in $\Zalpha$, and let $w$ be the highest descendant of $v$
that is in $\Yalpha$.
Then, $w=\lca{\nodesucc{v}}{\nodepred{\rleaf{v}}}$,
where $\rleaf{v}$ is the rightmost descendant leaf of $v$.
\end{lemma}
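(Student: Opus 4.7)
The plan is to prove that $w$ coincides with the LCA of all $\alpha$-nodes lying in the subtree of $v$, and then to rewrite this LCA in the claimed form using that the subtree of $v$ occupies the contiguous preorder range from $v$ to $\rleaf{v}$. Let $S$ denote the set of $\alpha$-nodes in the subtree of $v$, which is nonempty since $v\in\Zalpha$, and let $p,q$ be its preorder-first and preorder-last elements. Since the subtree of $v$ is the preorder range from $v$ to $\rleaf{v}$, the definitions of $\nodesucc{\cdot}$ and $\nodepred{\cdot}$ give $p=\nodesucc{v}$ and $q=\nodepred{\rleaf{v}}$. The subtree of $\lca{p}{q}$ is itself a contiguous preorder range containing $p$ and $q$, hence containing the full range from $p$ to $q$, so every element of $S$ lies in it and $\lca{p}{q}$ is an ancestor of every element of $S$. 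Moreover $\lca{p}{q}\in\Yalpha$: if $|S|=1$ then $\lca{p}{q}=p$ is itself $\alpha$-colored, and otherwise $p$ and $q$ lie under two distinct children of $\lca{p}{q}$ which both belong to $\Zalpha$, matching the second clause of the definition of $\Yalpha$.

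Next I would verify that $\lca{p}{q}$ is the \emph{highest} $\Yalpha$-descendant of $v$, by showing that no proper ancestor $u$ of $\lca{p}{q}$ that is also a descendant of $v$ can lie in $\Yalpha$. If $u$ itself were $\alpha$-colored then $u\in S$, forcing $u$ into the subtree of $\lca{p}{q}$ and contradicting the assumption that $u$ is a strict ancestor. Otherwise $u$ has two children $c_1,c_2\in\Zalpha$, each of which contains some element of $S$; but exactly one of the $c_i$ lies on the path from $u$ down to $\lca{p}{q}$, so the element of $S$ sitting inside the other subtree cannot be in the subtree of $\lca{p}{q}$, contradicting the observation above that every element of $S$ lies there.

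The main obstacle is this maximality step, because both clauses of the definition of $\Yalpha$ must be eliminated simultaneously and one must use the global property that the whole of $S$ sits under $\lca{p}{q}$ to rule out the second clause; once that is in hand, the lemma follows by stringing together $w=\lca{p}{q}$, $p=\nodesucc{v}$, and $q=\nodepred{\rleaf{v}}$.
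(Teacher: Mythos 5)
Your proof is correct in substance and follows essentially the same route as the paper's, just run in the opposite direction: the paper starts from $w\in\Yalpha$, cases on whether $w$ is $\alpha$-colored, and asserts without proof that every $\alpha$-descendant of $v$ is a descendant of $w$; you instead start from $\lca{\nodesucc{v}}{\nodepred{\rleaf{v}}}$ and prove that domination fact explicitly via the preorder-range argument, which is the more complete write-up of the same idea. Two small points need tightening. First, when $|S|\ge 2$ it can happen that $p=\nodesucc{v}$ is a proper ancestor of $q=\nodepred{\rleaf{v}}$ (e.g.\ a descending chain of $\alpha$-nodes), in which case $\lca{p}{q}=p$ and your claim that ``$p$ and $q$ lie under two distinct children of $\lca{p}{q}$'' is false; the conclusion $\lca{p}{q}\in\Yalpha$ still holds because $\lca{p}{q}$ is then itself $\alpha$-colored, so the correct case split is on whether $\lca{p}{q}\in\{p,q\}$ rather than on $|S|=1$. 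Second, ruling out $\Yalpha$-nodes among the proper ancestors of $\lca{p}{q}$ does not by itself make $\lca{p}{q}$ the \emph{highest} $\Yalpha$-descendant of $v$: you also need that every $\Yalpha$-descendant of $v$ is comparable (ancestor or descendant) to $\lca{p}{q}$, which follows in one line from what you already established --- such a node lies in $\Zalpha$, hence has an $\alpha$-descendant in $S$, hence its subtree meets the subtree of $\lca{p}{q}$. With those two lines added the argument is complete.
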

\begin{proof}
Suppose first that $w$ does not have color~$\alpha$.
Since $w \in \Yalpha$, $w$ has at least two children that are in $\Zalpha$.
Let $w',w''$ be the first and last children of $w$ that are in $\Zalpha$,
respectively.
Every $\alpha$-descendant of $v$ is also a descendant of $w$.
It follows that $\nodesucc{v}$ is a descendant of $w'$ and
$\nodepred{\rleaf{v}}$ is a descendant of $w''$.
Thus, $\lca{\nodesucc{v}}{\nodepred{\rleaf{v}}} = \lca{w'}{w''} = w$.

If $w$ has color~$\alpha$ then $\nodesucc{v} = w$ and $\nodepred{\rleaf{v}}$ is
a descendant of $w$.
Therefore, $\lca{\nodesucc{v}}{\nodepred{\rleaf{v}}} = w$.
\end{proof}

We now show how to find the nearest $\alpha$-non-descendant of $y$.
We use the approach of~\cite{Tsur_labeled}.
For the case $\sigma=w^{O(1)}$ let $L = f(n)$ where $f$ is a function that
satisfies $f(n)=\omega(1)$ and $f(n)=O(\log n)$,
and for larger $\sigma$ let $L = \sqrt{\log\frac{\log\sigma}{\log w}}$.
We say that a color~$\alpha$ is \emph{frequent} if the number of $\alpha$-nodes
is at least $L$.
Given a query $x,\alpha$, finding whether $\alpha$ is frequent can be done
by performing a rank query on $P_T$.
If $\alpha$ is non-frequent, the query can be answered
by enumerating all $\alpha$-nodes
(by computing $\preorderselect{\select{\alpha}{P_T}{k}}$ for all $k$,
where $\select{\alpha}{P_T}{k}$ is the $k$-th occurrence of $\alpha$ in $P_T$
and $\preorderselect{i}$ is the $i$-th node of $T$ in preorder)
and computing the distance between $x$ and each enumerated node.
The time complexity is $O(L\cdot\tselect)$,
where $\tselect$ is the time of a select query on $P_T$.
Since $\tselect = O(1)$ for small alphabet
and $\tselect = o(\sqrt{\log\frac{\log\sigma}{\log w}})$ for large alphabet,
it follows that the time for answering a query is any $\omega(1)$ for small
alphabet and $O(\log\frac{\log\sigma}{\log w})$ for large alphabet.
For the rest of the section, we describe how to handle queries in which
the color is frequent.

We apply the tree decomposition of Section~\ref{sec:tree-decomposition} on
$\Talpha$ with parameter $L$ and obtain the tree $\TalphaL$.
For a node $v_S$ in $\TalphaL$
(recall that $S$ is a subtree of $\Talpha$) let
$\V{v_S}$ be the set of the nodes of $S$ excluding the root, and
$\Va{v_S}$ be the set of $\alpha$-nodes in $\V{v_S}$.
For a node $u\in \Yalpha$, we denote by $\map{u}$ the
node of $\TalphaL$ for which $u\in \V{\map{u}}$.
Due to the properties of the tree decomposition we have that for two nodes
$u,v \in \Yalpha$, $\lca{u}{v}$ is a node in the tree $S$ in the 
decomposition for which $v_S = \lca{\map{u}}{\map{v}}$
(if $\lca{u}{v}$ is not the root of $S$ then $\map{\lca{u}{v}} = v_S$
and otherwise $\map{\lca{u}{v}} = \parent{v_S}$).


We assign weights to each node $v_S$ of $\TalphaL$ as follows.
\begin{itemize}
\item $\wdist(v_S)$ is the distance between the boundary nodes of $S$.
\item $\wdistroot(v_S)$ (resp., $\wdistspecial(v_S)$) is the
shortest distance between the root (resp., boundary leaf) of $S$ and
a node in $\Va{v_S}$. If $\Va{v_S} = \emptyset$ then
$\wdistroot(v_S) = \wdistspecial(v_S) = \infty$.
\end{itemize}
Let $v$ and $v'$ be two nodes of $\TalphaL$,
and let $P$ be the path from $v$ to $v'$.
The \emph{weighted distance from $v$ to $v'$} is the sum of the following
values.
\begin{enumerate}
\item $\wdist(u)$ for every node $u \neq v,v'$ which is on $P$
and the parent of $u$ is also on $P$.
\item $\wdistroot(v')$ if $v'$ is not an ancestor of $v$.
\item $\wdistspecial(v')$ if $v'$ is an ancestor of $v$.
\end{enumerate}
The descendant (resp., non-descendant) of $v$ with minimum weighted distance of $v$
(with ties broken arbitrarily) will be denoted $\wnearestd{v}$ (resp., $\wnearestnd{v}$).

Our approach for finding the nearest $\alpha$-non-descendant of $y$ is based on
the following observation.
\begin{observation}
Let $v_S \neq v_{S'}$ be two nodes of $\TalphaL$,
and $u$ be a node in $S$. 
The shortest distance between $u$ and a node in $\Va{v_{S'}}$ is equal
to the weighted distance from $v_S$ to $v_{S'}$ plus
the distance between $u$ and the boundary leaf of $S$ if $v_{S'}$ is a
descendant of $v_S$,
and the distance between $u$ and the root of $S$ otherwise.
\end{observation}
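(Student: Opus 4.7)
The plan is to prove the equality by case analysis on the relationship between $v_S$ and $v_{S'}$ in $\TalphaL$, leveraging the structural fact that any path in $T$ between nodes belonging to different subtrees in the decomposition $\D{\Talpha}{L}$ is forced to cross the boundary nodes of the intermediate subtrees (Properties~\ref{enum:boundary} and~\ref{enum:boundary2} of Lemma~\ref{lem:tree-decomposition}). This gives a canonical decomposition of the shortest such path into a segment inside $S$ ending at a boundary node of $S$, a sequence of root-to-boundary-leaf (or boundary-leaf-to-root) traversals through intermediate subtrees, and a final segment inside $S'$ ending at an $\alpha$-node. The weighted distance definition should then match this decomposition term by term.

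First I would handle the case where $v_{S'}$ is a descendant of $v_S$. Any path from $u$ to a node of $\Va{v_{S'}}$ must pass through the boundary leaf of $S$ (the unique gate into the descendant subtrees), then traverse each intermediate subtree from root to boundary leaf, then enter $S'$ at its root and reach an $\alpha$-node of $\Va{v_{S'}}$. The minimum length is therefore the distance from $u$ to the boundary leaf of $S$, plus $\sum \wdist(v_{S_i})$ over the intermediate nodes on the $\TalphaL$-path, plus $\wdistroot(v_{S'})$; this matches the weighted distance, since $v_{S'}$ is not an ancestor of $v_S$. The symmetric case where $v_{S'}$ is an ancestor of $v_S$ proceeds analogously: the path exits $S$ through its root, traverses intermediate subtrees from boundary leaf to root, and enters $S'$ at its boundary leaf, requiring $\wdistspecial(v_{S'})$ to reach a node of $\Va{v_{S'}}$.

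The main obstacle I anticipate is the third case, where $v_S$ and $v_{S'}$ are unrelated and their $\TalphaL$-path passes through a proper common ancestor $v_{S_0}$. The subtle point is what happens inside $S_0$: the shortest path enters $S_0$ from below at its boundary leaf, and must exit into a different child subtree of $v_{S_0}$ in $\TalphaL$. The key observation is that, by Property~\ref{enum:boundary2} of Lemma~\ref{lem:tree-decomposition}, every subtree in $\D{\Talpha}{L}$ has a \emph{unique} boundary leaf, so all children of $v_{S_0}$ in $\TalphaL$ are subtrees rooted at the same node (the boundary leaf of $S_0$). Hence the path enters and exits $S_0$ at the very same vertex and accumulates zero distance inside $S_0$. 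This is precisely mirrored in the weighted distance, which omits $\wdist(v_{S_0})$ because the parent of $v_{S_0}$ is not on the $\TalphaL$-path.

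Once this ``zero cost at the LCA'' fact is established, the third case reduces to gluing the upward and downward analyses: the total length becomes the distance from $u$ to the root of $S$, plus $\wdist$ contributions from all intermediate nodes except $v_{S_0}$, plus $\wdistroot(v_{S'})$ from entering $S'$ at its root (since $v_{S'}$ is not an ancestor of $v_S$). In all three cases I would finish by exhibiting a concrete path achieving the stated length (using the nearest $\alpha$-node from the entry point in $S'$) to verify the equality rather than merely a lower bound.
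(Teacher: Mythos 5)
Your proposal is correct, and since the paper states this as an Observation with no proof at all, your case analysis is precisely the justification the definitions are engineered to support: the path from $u$ to $\Va{v_{S'}}$ is forced through the boundary nodes of the intermediate subtrees, each intermediate subtree contributes exactly its $\wdist$ weight, and the final subtree contributes $\wdistroot$ or $\wdistspecial$ according to whether it is entered at its root or its boundary leaf. Your handling of the LCA case is the one point that genuinely needs care, and you get it right: because each tree of $\D{\Talpha}{L}$ has a unique boundary leaf, all children of $v_{S_0}$ in $\TalphaL$ are rooted at that same vertex, so the turnaround contributes zero distance, matching the omission of $\wdist(v_{S_0})$ from the weighted distance (the only detail left implicit is that the two child subtrees branch off that vertex in different directions of $T$, which follows from edge-disjointness of the decomposition and the fact that $\Yalpha$ is closed under lowest common ancestors).
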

\begin{corollary}\label{cor:yp-v}
Let $v_S$ be a node of $\TalphaL$.
Let $u$ be a node in $S$ 
and $u'$ be the nearest $\alpha$-non-descendant of $u$.
If $u$ is not the root of $S$ then
$u' \in \Va{v_S} \cup \Va{\wnearestnd{v_S}} \cup \Va{\wnearestd{v_S}}$
and otherwise
$u' \in \Va{\parent{v_S}} \cup \Va{\wnearestnd{\parent{v_S}}}$.
\end{corollary}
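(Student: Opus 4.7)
The plan is to apply the preceding observation directly: for every $v_{S'} \neq v_S$ in $\TalphaL$ it expresses the shortest distance from $u$ to $\Va{v_{S'}}$ as the weighted distance from $v_S$ to $v_{S'}$ plus an additive constant depending only on $u$. This constant equals $d(u,\text{boundary leaf of }S)$ when $v_{S'}$ is a descendant of $v_S$ and $d(u,\text{root of }S)$ otherwise. I will derive the corollary by a case analysis on which of the disjoint families $\{\Va{v_{S'}}\}$ the target $u'$ lies in.

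I start with the case $u$ is not the root of $S$. Partition the $\alpha$-nodes of $\Yalpha$ into three groups: $\Va{v_S}$; the union $A$ of $\Va{v_{S'}}$ over all strict descendants $v_{S'}$ of $v_S$; and the union $B$ of $\Va{v_{S'}}$ over all non-descendants of $v_S$. Two structural facts are needed: every node of $A$ is a descendant in $\Talpha$ of the boundary leaf of $S$, and every node of $B$ is a non-descendant in $\Talpha$ of $u$. Granting these, the case $u' \in \Va{v_S}$ is immediate. When $u' \in B$, the observation writes $d(u,\cdot)$ on $B$ as a fixed constant plus the weighted distance from $v_S$, so the minimizer lies in $\Va{\wnearestnd{v_S}}$. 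When $u' \in A$, the fact that $u'$ is a non-descendant of $u$ forces $u$ not to be an ancestor of the boundary leaf of $S$; hence every element of $A$ is automatically a non-descendant of $u$, and the same minimization argument puts $u' \in \Va{\wnearestd{v_S}}$.

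Next I reduce the case $u$ is the root of $S$ to the previous paragraph by regarding $u$ as the boundary leaf of $\parent{S}$, which makes it a non-root node of $\parent{S}$. Applying the previous case with $\parent{v_S}$ in place of $v_S$ yields $u' \in \Va{\parent{v_S}} \cup \Va{\wnearestd{\parent{v_S}}} \cup \Va{\wnearestnd{\parent{v_S}}}$. But $u$ is trivially its own ancestor, so the set $A$ relative to $\parent{v_S}$ consists entirely of $\Talpha$-descendants of $u$ and cannot contain $u'$, eliminating the middle term.

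The main obstacle I expect to address is justifying that every node of $B$ is a non-descendant of $u$ in $\Talpha$. This requires unwinding properties~\ref{enum:boundary} and~\ref{enum:boundary2} of the tree decomposition: a subtree $S'$ of $\Talpha$ lying outside the $\TalphaL$-subtree rooted at $v_S$ can be reached from $u$ only by first going up through the root of $S$, so no non-root node of such $S'$ can sit in the $\Talpha$-subtree rooted at $u$.
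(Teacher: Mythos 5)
Your proposal is correct and matches the paper's intent: the paper states this corollary with no proof at all, treating it as immediate from the preceding observation, and your case analysis (partitioning the $\alpha$-nodes by whether their $\TalphaL$-block is $v_S$, a strict descendant of $v_S$, or a non-descendant, then minimizing the weighted distance within each group) is exactly the derivation the paper leaves implicit. The structural facts you isolate — that strict-descendant blocks hang below the boundary leaf of $S$ and that non-descendant blocks contain only non-descendants of $u$ — are the right consequences of properties~\ref{enum:boundary} and~\ref{enum:boundary2}, and your reduction of the root case via ``the root of $S$ is the boundary leaf of its parent block'' is sound.
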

Based on Corollary~\ref{cor:yp-v},
the algorithm for finding the nearest $\alpha$-non-descendant of $y$ is as
follows.

\begin{enumerate}
\item Find $z$ using Lemma~\ref{lem:z}.
\item Compute $y = \lca{\nodesucc{z}}{\nodepred{\rleaf{z}}}$.
\item Compute $y' = \lca{\map{\nodesucc{z}}}{\map{\nodepred{\rleaf{z}}}}$
and $y'' = \parent{y'}$.\label{enu:mapy}
\item Enumerate all nodes in\label{enu:enumerate}
$\Va{y'} \cup \Va{\wnearestnd{y'}} \cup \Va{\wnearestd{y'}} \cup
 \Va{y''} \cup \Va{\wnearestnd{y''}}$ that are not descendants of $y$,
compute their distances to $y$, and return the node that is nearest to $y$.
\end{enumerate}

In order to perform steps~\ref{enu:mapy} and~\ref{enu:enumerate} efficiently,
we store $T'$ using the weighted tree structure of~\cite{Tsur_labeled}.
This structure supports computing $\map{u}$ for an $\alpha$-node~$u$
in $O(1)$ time,
and additionally, it supports computing the preorder ranges of the nodes
in $\Va{v_S}$ for a node $v_S$ of $\TalphaL$ in $O(1)$ time.

The data-structures for computing $\wnearestd{v_S}$ and $\wnearestnd{v_S}$
for some node $v_S$ of $\TalphaL$ is as follows.
We apply the tree decomposition of
Section~\ref{sec:tree-decomposition}.
Every tree $\TalphaL$ is partitioned into \emph{mini-trees}
of size at most $L_1=\log^{\Theta(1)} n$, and every mini-tree
is decomposed into \emph{micro-trees} of size at most $L_2 = \Theta(\log n)$.
The trees $\TalphaL$ are merged into a single tree $T'$ by connecting their
roots to a new node.
We now store the following.
\begin{itemize}
\item
A lookup table that contains for every tree $S$ of size at most $L$
with weights $\wdist,\wdistroot,\wdistspecial$
on its nodes and every node $u$ in $S$, the node $u_2$ in $S$ whose weighted
distance from $u$ is minimum.
\item
The balanced parenthesis sequence of $T'$.
\item
For $i=\idxmin,\ldots,\idxmax$,
a strings $W_i$ that contains the $w_i$ weights of the nodes of $T'$ according
to preorder.
\item
For every mini-tree $S$ and every color~$\alpha$,
the nodes in $\TalphaL$ with minimum weighted distances
to the root of $S$ and to the boundary leaf of $S$.
\item
For every micro-tree $S$ and every color~$\alpha$,
the nodes in the mini-tree of $S$ with minimum weighted distances
to the root of $S$ and to the boundary leaf of $S$.
\end{itemize}
Using this information, $\wnearestd{v_S}$ and $\wnearestnd{v_S}$ can be found in $O(1)$ time.
Therefore, the nearest $\alpha$-non-descendant of $y$ 
can be found in time $\omega(1)$ for small alphabet and
$O(\log\frac{\log\sigma}{\log w})$ for large alphabet.

Finally, we describe how to find the nearest $\alpha$-descendant of a node $v$.
For every frequent color~$\alpha$, let $A_\alpha$ be an array containing
the depths of the $\alpha$-nodes in preorder.
We build a sampled RMQ structure (Lemma~\ref{lem:rmq})
on $A_\alpha$ with sampling parameter $L$.
To find the nearest $\alpha$-descendant of a node $v$,
find the range $\range{i}{j}$ of preorder ranks of
the $\alpha$-descendants of $v$ using rank queries on $P_T$ and tree queries
on $T$.
Let $i'$ be the minimum integer such that $i' \geq i$ and $i'-1$ is a multiple
of $L$, and
let $j'$ be the maximum integer such that $j' \leq j$ and $j'$ is a multiple
of $L$.
Assuming $i'<j'$ (the case $i'>j'$ is simpler and we omit the details),
enumerate the $\alpha$-nodes with preorder ranks in
$\range{i}{i'-1} \cup \range{j'+1}{j} \cup \{\rmq{A_\alpha}{i'}{j'}\}$,
compute the distances between these node and $v$,
and return the node with smallest distance.

\bibliographystyle{plain}
\bibliography{string-index,nearest,dekel}
\end{document}